\newtheorem{assumption}{Assumption}
\newtheorem{lemma}{Lemma}
\newtheorem{remark}{Remark}
\newtheorem{theorem}{Theorem}
\newenvironment{proof}{{\indent \indent \it Proof:}}
\begin{document}
\title{	Prescribed Performance Adaptive Fixed-Time Attitude Tracking Control of a 3-DOF Helicopter with Small Overshoot}

\author{
	\vskip 1em
	
	Xidong Wang

	\thanks{
					
		Xidong Wang is with the Research Institute of Intelligent Control and Systems, School of Astronautics, Harbin Institute of Technology, Harbin 150001, China (e-mail: 17b904039@stu.hit.edu.cn). 
	}
}

\maketitle
	
\begin{abstract}
In this article, a novel prescribed performance adaptive fixed-time backstepping control strategy is investigated for the attitude tracking of a 3-DOF helicopter. First, a new unified barrier function (UBF) is designed to convert the prescribed performance constrained system into an unconstrained one. Then, a fixed-time (FxT) backstepping control framework is established to achieve the attitude tracking. By virtual of a newly proposed inequality, a non-singular virtual control law is constructed. In addition, a FxT differentiator with a compensation mechanism is employed to overcome the matter of ``explosion of complexity". Moreover, a modified adaptive law is developed to approximate the upper bound of the disturbances. To obtain a less conservative and more accurate approximation of the settling time, an improved FxT stability theorem is proposed. Based on this theorem, it is proved that all signals of the system are FxT bounded, and the tracking error converges to a preset domain with small overshoot in a user-defined time. Finally, the feasibility and effectiveness of the presented control strategy are confirmed by numerical simulations.
\end{abstract}

\begin{IEEEkeywords}
Unified barrier function (UBF), non-singular virtual control law, adaptive law, FxT stability theorem, 3-DOF helicopter.
\end{IEEEkeywords}

{}

\definecolor{limegreen}{rgb}{0.2, 0.8, 0.2}
\definecolor{forestgreen}{rgb}{0.13, 0.55, 0.13}
\definecolor{greenhtml}{rgb}{0.0, 0.5, 0.0}

\section{Introduction}

\IEEEPARstart{I}{n} recent decades, as one of the typical rotorcrafts, the helicopter has been paid substantial attention because of its widespread applications in various scenarios \cite{1.Li2015F}. However, the high-performance tracking control design of the helicopter remains a challenging project due to the characteristics of the system \cite{2.li2021}. 

\begin{figure}[!t]\centering
	\includegraphics[width=8.5cm]{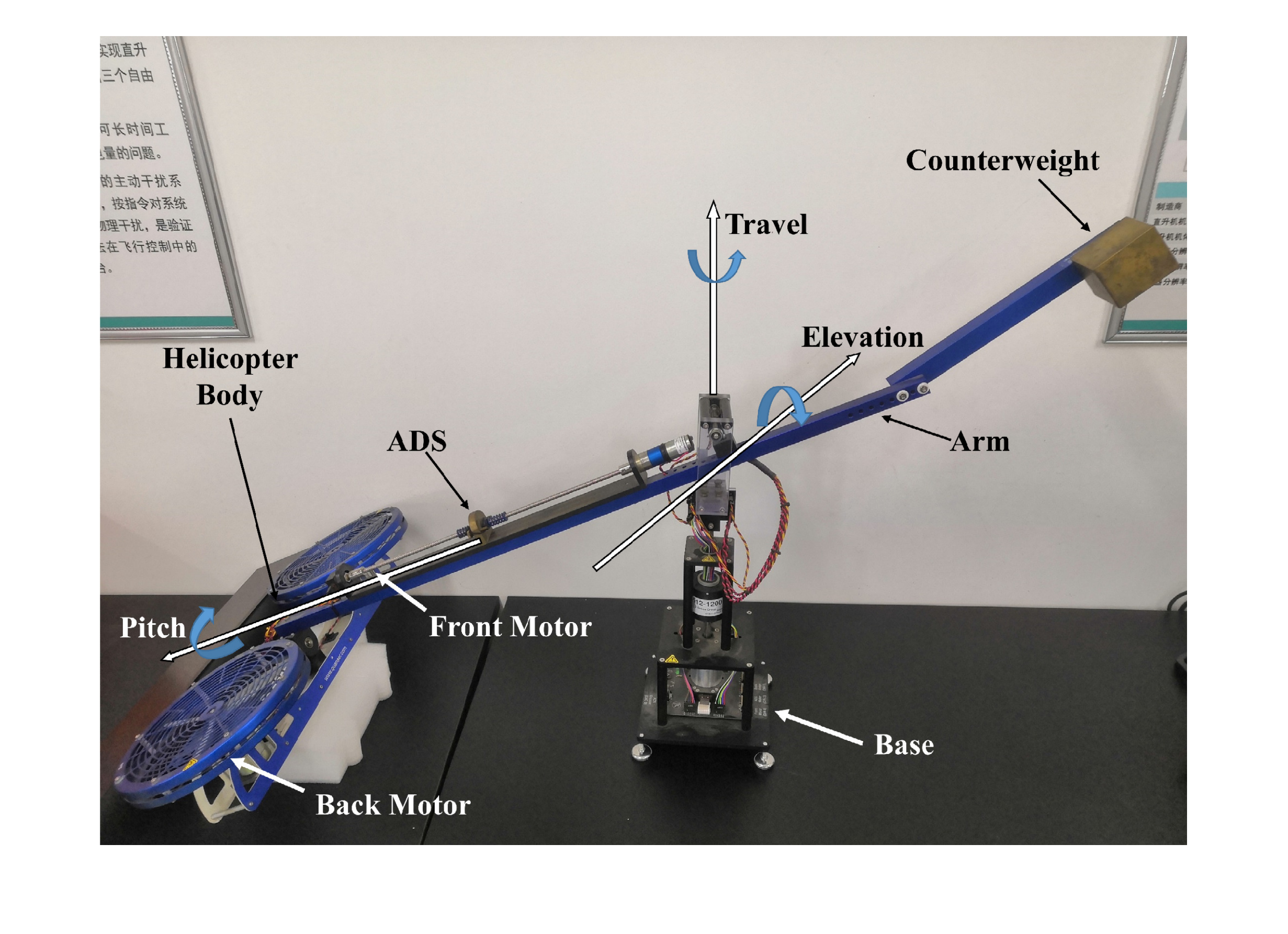}
	\caption{Structure of the 3-DOF helicopter system}
\end{figure}

Owing to the advantages of high tracking accuracy, fast response, and anti-interference capability, finite-time (FnT) control approaches have become a research hotspot \cite{fnt}. Recently, a variety of FnT control schemes have been developed and applied to the 3-DOF helicopter platform (Fig. 1). In \cite{fnt1}, a robust regulation controller for the laboratory helicopter was constructed based on the super-twisting sliding mode observer (STSMO), which was utilized to estimate the lumped disturbances and identify the state vector in finite time. To tackle the problem that the upper bound of disturbance is usually unknown in advance, the authors in \cite{fnt2} developed an adaptive-gain super-twisting algorithm for the tracking control of the helicopter platform. In \cite{fnt3}, an adaptive FnT control law based on a continuous differentiator was presented for the lab helicopter system, which can effectively suppress chattering. An adaptive smooth FnT control protocol, integrated with a singularity-free integral sliding mode surface, was established to fulfill the helicopter attitude tracking in \cite{2020arXiv}. In \cite{2021arXiv}, a novel control strategy, which embraced a modified FnT backstepping controller based on a newly designed disturbance observer, was presented for the tracking problem of the lab helicopter platform.

The drawback of the above strategies is that the settling time is related to the initial conditions, which is difficult to obtain accurately in many practical systems. Moreover, when the initial states are far from the equilibrium points, the settling time will become longer. To work out this problem, the FxT stability was first put forward in \cite{fxt}, where the settling time has no connection with the initial values. Thereafter, numerous FxT control approaches were developed to solve the control problems of different kinds of systems \cite{fxt1,fxt2,fxt3,fxt4}. 

Although the above-mentioned control laws can fulfill the satisfactory steady-state performance, the tracking errors' transient behavior is seldom taken into account. For the sake of addressing this problem, a prescribed performance control approach was first presented in \cite{PPC1}, which introduced the concepts of error transformation and performance function (PF) to guarantee the behavior at both transient and steady states. In \cite{PPC2}, the FnT prescribed tracking control was achieved by means of a newly designed PF. Authors in \cite{PPC3} proposed a novel prescribed performance FxT control scheme to realize the guaranteed tracking of strict-feedback nonlinear systems. By incorporating the PF with backstepping algorithm, a FxT adaptive control protocol was developed in \cite{PPC4} to cope with the trajectory tracking problem. In addition, a FxT differentiator with a novel auxiliary compensator was adopted to handle the ``explosion of complexity".

Enlightened by the aforementioned observations, this article develops a novel prescribed performance adaptive fixed-time control protocol for the attitude tracking of the 3-DOF helicopter system via the backstepping algorithm. First, a newly designed UBF is utilized to transform the prescribed performance constrained system into an unconstrained one. Then, motivated by \cite{PPC4}, a FxT backstepping control strategy is established to fulfill the attitude tracking. By using a newly proposed inequality, a non-singular virtual control law is constructed. In addition, a FxT differentiator with a compensator is adopted to overcome the matter of ``explosion of complexity". To compensate the impact of disturbances, a modified adaptive law is developed. The major innovations of this article can be stated as follows:
\begin{enumerate}[1)]
	\item A novel UBF is developed to convert the prescribed performance restrained system into an unrestrained one, which can effectively diminish the control input by selecting appropriate parameters.
	\item An improved FxT stability theorem is proposed to provide a less conservative and more accurate approximation of the settling time than the classical result.
\end{enumerate}

Based on the modified FxT stability theorem, it is proved that all signals of the system are FxT bounded, while the tracking error converges to a predetermined domain with small overshoot in a user-defined time. The feasibility and effectiveness of the proposed control protocol are confirmed by numerical simulations. 

The rest of this article is arranged as follows. The mathematical model of the helicopter system, an improved FxT stability theorem and some useful lemmas are presented in Section II. Section III describes the control design procedure. Numerical simulations are conducted in Section V to show the feasibility and effectiveness of the designed control protocol. Section VI draws the conclusion.

Notation: In this article, ${\mathop{\rm sig}\nolimits} {\left( x \right)^r } = {\left| x \right|^r }{\mathop{\rm sgn}} (x)$.
\section{Problem formulation and preliminaries}
\subsection{Problem Formulation}
In light of the attitude tracking problem of elevation and pitch angles, the mathematical model of the helicopter system can be expressed as \cite{2020arXiv}

\begin{equation}
\begin{aligned}
{{\dot x}_1} &= {x_2}\\
{{\dot x}_2} &= \frac{{{L_a}}}{{{J_\alpha }}}{{\bar u}_1} - \frac{g}{{{J_\alpha }}}m_e{L_a}\cos ({x_1}) + {d_1}(x,t)\\
{{\dot x}_3} &= {x_4}\\
{{\dot x}_4} &= \frac{{{L_h}}}{{{J_\beta }}}{{\bar u}_2} + {d_2}(x,t)
\end{aligned}
\end{equation}

The definitions of variables and the values of parameters can be found in \cite{2020arXiv}. 

The control aim is to design a prescribed performance FxT control protocol such that the tracking errors can converge to a predetermined domain with small overshoot in a user-defined time, while all signals of the system are bounded within a fixed time. 
\begin{assumption}
The desired tracking signals ${x_{1d}}(t),{x_{3d}}(t)$ and their derivatives are bounded, smooth and known.
\end{assumption}
\begin{assumption}
The disturbances are bounded as well as smooth.
\end{assumption}
\subsection{Improved FxT stability}
\begin{lemma}
Consider the following system 
\begin{equation}
\dot x(t) = f(x(t)),x(0)={x_0}
\end{equation}
where $x \in {R^n}$, $f:{R^n} \to {R^n}$ and $f(0) = 0$. If there exists a continuous Lyapunov function $V\left( x \right)$ satisfying:
\begin{equation}
\dot V(x(t)) \le  - {\mu _1}V{(x(t))^p} - {\mu _2}V{(x(t))^q}{\rm{ + }}{\mu _3}
\end{equation}
where ${\mu _1} > 0,{\mu _2} > 0,0 < {\mu _3} < \infty ,0 < p < 1,q > 1$, then the origin of system (2) is practical FxT stable, and the residual set is bounded by:
\begin{equation}
{F_x} = \left\{ {x:V\left( x \right) \le \min \left\{ {{{\left( {\frac{{{\mu _3}}}{{\left( {1 - \tau } \right){\mu _1}}}} \right)}^{\frac{1}{p}}},{{\left( {\frac{{{\mu _3}}}{{\left( {1 - \tau } \right){\mu _2}}}} \right)}^{\frac{1}{q}}}} \right\}} \right\}
\end{equation}
where $\tau  \in (0,1)$. Moreover, the settling time $T$ can be estimated by:
\begin{equation}
\begin{aligned}
T \le {T_1} = \max \left\{ {\frac{{\Gamma \left( {\frac{{1 - p}}{{q - p}}} \right)\Gamma \left( {\frac{{q - 1}}{{q - p}}} \right)}}{{{\mu _1}\left( {q - p} \right)}}{{\left( {\frac{{{\mu _1}}}{{\tau {\mu _2}}}} \right)}^{\frac{{1 - p}}{{q - p}}}}},\right.\\
\phantom{=\;\;}\left.{\frac{{\Gamma \left( {\frac{{1 - p}}{{q - p}}} \right)\Gamma \left( {\frac{{q - 1}}{{q - p}}} \right)}}{{\tau {\mu _1}\left( {q - p} \right)}}{{\left( {\frac{{\tau {\mu _1}}}{{{\mu _2}}}} \right)}^{\frac{{1 - p}}{{q - p}}}}} \right\}
\end{aligned}
\end{equation}
where $\Gamma \left( z \right) = \int_0^{{\rm{ + }}\infty } {\exp \left( { - \theta } \right)} {\theta ^{z - 1}}d\theta $ is the gamma function.
\end{lemma}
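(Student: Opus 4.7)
The plan is to convert the differential inequality (3) into two pure-decay inequalities valid outside the residual set, separate variables, and then evaluate the resulting improper integral through a Beta-function substitution to extract the Gamma-function factors appearing in (5).

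First, for the chosen $\tau\in(0,1)$ I would split $-\mu_1V^p=-\tau\mu_1V^p-(1-\tau)\mu_1V^p$ and analogously for the $V^q$ term. Whenever $V\ge\bigl(\mu_3/((1-\tau)\mu_1)\bigr)^{1/p}$ the constant $\mu_3$ is absorbed by $(1-\tau)\mu_1V^p$, so (3) collapses to $\dot V\le -\tau\mu_1V^p-\mu_2V^q$; symmetrically, whenever $V\ge\bigl(\mu_3/((1-\tau)\mu_2)\bigr)^{1/q}$, (3) collapses to $\dot V\le -\mu_1V^p-\tau\mu_2V^q$. Outside the residual set $F_x$ of (4), $V$ exceeds the smaller of these two thresholds, so at least one pure-decay inequality is active; on the boundary of $F_x$ one already has $\dot V\le 0$, which makes $F_x$ positively invariant.

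Next, separating variables in whichever inequality applies and integrating from an arbitrary $V(0)$ down to the boundary of $F_x$ yields the uniform bound
\[
T\le\int_0^{\infty}\frac{dV}{aV^p+bV^q},\qquad (a,b)\in\{(\tau\mu_1,\mu_2),\,(\mu_1,\tau\mu_2)\}.
\]
Convergence at $0$ and at $+\infty$ follows from $0<p<1<q$, and this independence from $V(0)$ is precisely the fixed-time mechanism. To obtain a closed form I would rescale $V=(a/b)^{1/(q-p)}w$, which makes the denominator a constant multiple of $w^p+w^q$, then substitute $s=w^{q-p}$ to reduce the integral to
\[
\frac{1}{q-p}\int_0^{\infty}\frac{s^{\alpha-1}}{1+s}\,ds,\qquad \alpha=\frac{1-p}{q-p}\in(0,1),
\]
and apply the Beta-function identity $\int_0^{\infty}s^{\alpha-1}/(1+s)\,ds=\Gamma(\alpha)\Gamma(1-\alpha)$ to recover the two Gamma-function expressions. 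Because the branch of the min that actually defines $F_x$ is not known a priori, the safe uniform bound is the maximum over the two $(a,b)$ choices, which is exactly (5).

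The hard part is the Beta-function evaluation of the rescaled integral combined with the bookkeeping of the two-branch case split responsible for the max in (5); once that computation is organized cleanly, the dominance estimate that exposes the residual set and the separation-of-variables step are entirely routine.
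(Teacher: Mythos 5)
Your proposal is correct and follows essentially the same route as the paper: the $\tau$-split of $-\mu_1V^p$ (resp.\ $-\mu_2V^q$) to absorb $\mu_3$ outside the residual set is exactly the paper's displayed step, and your separation-of-variables plus Beta-function evaluation of $\int_0^\infty dV/(aV^p+bV^q)$ is precisely the content of the Proposition the paper then cites to finish. You have simply supplied in full the computation the paper delegates to that reference, and your two-branch bookkeeping correctly reproduces both the $\min$ in (4) and the $\max$ in (5).
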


\begin{proof}
For a scalar $\tau  \in (0,1)$, (3) can be rewritten as
\begin{equation}
\dot V(x) \le  - \tau {\mu _1}V{(x)^p} - (1 - \tau ){\mu _1}V{(x)^p} - {\mu _2}V{(x)^q}{\rm{ + }}{\mu _3}
\end{equation}

or
\begin{equation}
\dot V(x) \le  - {\mu _1}V{(x)^p} - \tau {\mu _2}V{(x)^q} - (1 - \tau ){\mu _2}V{(x)^q}{\rm{ + }}{\mu _3}
\end{equation}

Then using \emph{Proposition 1} in \cite{fixed}, the approximation of the settling time (5) can be obtained.
\end{proof}

\begin{remark}
The classical result for the approximation of the settling time is \cite{fixed1,fixed2}
\begin{equation}
{T_2} = \frac{1}{{\tau {\mu _1}(1 - p)}} + \frac{1}{{\tau {\mu _2}(q - 1)}}
\end{equation}

It can be proved that $T_1 < T_2$, which means that \emph{Lemma 1} provides a less conservative and more accurate approximation of the settling time than the classical result.
\end{remark}
\begin{remark}
Denote $l = \frac{{1 - p}}{{q - p}}$. By means of $\Gamma \left( l \right)\Gamma \left( {1 - l} \right) = \frac{\pi }{{\sin \left( {l\pi } \right)}}$, $T_1$ can be rewritten as
\begin{equation}
\begin{aligned}
{T_1} = \max \left\{ {\frac{\pi }{{{\mu _1}\left( {q - p} \right)\sin \left( {l\pi } \right)}}{{\left( {\frac{{{\mu _1}}}{{\tau {\mu _2}}}} \right)}^l}},\right.\\
\phantom{=\;\;}\left.{\frac{\pi }{{\tau {\mu _1}\left( {q - p} \right)\sin \left( {l\pi } \right)}}{{\left( {\frac{{\tau {\mu _1}}}{{{\mu _2}}}} \right)}^l}} \right\}
\end{aligned}
\end{equation}
\end{remark}

If $p+q=2$, inspired by \cite{new1,new2}, a further approximation of the settling time $T$ is obtained as follows.
\begin{lemma}
If $p+q=2$, the settling time $T$ of \emph{Lemma 1} can be further estimated by:
\begin{equation}
\begin{aligned}
T \le \bar T \cdot \max \left\{ {\frac{\pi }{2} - \arctan \left( {\sqrt {\frac{{{\mu _2}}}{{\tau {\mu _1}}}} {{\left( {\frac{{{\mu _3}}}{{(1 - \tau ){\mu _1}}}} \right)}^{\frac{{1 - p}}{p}}}} \right)},\right.\\
\phantom{=\;\;}\left.{\frac{\pi }{2} - \arctan \left( {\sqrt {\frac{{\tau {\mu _2}}}{{{\mu _1}}}} {{\left( {\frac{{{\mu _3}}}{{(1 - \tau ){\mu _2}}}} \right)}^{\frac{{1 - p}}{{2 - p}}}}} \right)} \right\}
\end{aligned}
\end{equation}
where $\bar T = \frac{1}{{(1 - p)\sqrt {\tau {\mu _1}{\mu _2}} }}$.
\end{lemma}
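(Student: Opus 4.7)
The plan is to mimic the two-way splitting argument used for Lemma 1, but to exploit the special structure $p+q=2$ (i.e.\ $q = 2-p$) to perform the resulting one-dimensional integral in closed form via an $\arctan$ rather than through Beta/Gamma functions. Let $\tau \in (0,1)$ and rewrite (3) in the two equivalent forms already displayed in (6)–(7). Outside the first residual set $V \le \left(\mu_3/((1-\tau)\mu_1)\right)^{1/p}$, the term $(1-\tau)\mu_1 V^p$ dominates $\mu_3$, giving $\dot V \le -\tau\mu_1 V^p - \mu_2 V^{2-p}$; outside the second residual set $V \le \left(\mu_3/((1-\tau)\mu_2)\right)^{1/q}$, form (7) yields $\dot V \le -\mu_1 V^p - \tau\mu_2 V^{2-p}$. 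From here the two cases are completely parallel, so I will only carry out the first.

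Next I would separate variables, writing $T \le \int_{V(T)}^{V(0)} \frac{dV}{\tau\mu_1 V^p + \mu_2 V^{2-p}}$, and introduce the substitution $u = V^{1-p}$, so that $du = (1-p) V^{-p}\,dV$. Since $\tau\mu_1 V^p + \mu_2 V^{2-p} = V^p\bigl(\tau\mu_1 + \mu_2 u^2\bigr)$, the integrand collapses to
\begin{equation*}
\frac{1}{1-p}\cdot\frac{du}{\tau\mu_1 + \mu_2 u^2},
\end{equation*}
whose antiderivative is $\tfrac{1}{(1-p)\sqrt{\tau\mu_1\mu_2}}\arctan\!\bigl(u\sqrt{\mu_2/(\tau\mu_1)}\bigr)$. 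This already reveals the common prefactor $\bar T = 1/((1-p)\sqrt{\tau\mu_1\mu_2})$.

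Now I would evaluate the definite integral. Letting $V(0) \to \infty$ is the worst case, which sends $\arctan \to \pi/2$; at the terminal time the trajectory enters the first residual set, so $V(T)^{1-p} = \bigl(\mu_3/((1-\tau)\mu_1)\bigr)^{(1-p)/p}$. This produces the first branch inside the $\max$ in (10). Repeating the same change of variables on $\dot V \le -\mu_1 V^p - \tau\mu_2 V^{2-p}$ — where now $a=\mu_1$, $b=\tau\mu_2$ and the residual-set boundary is $\bigl(\mu_3/((1-\tau)\mu_2)\bigr)^{1/(2-p)}$ — produces the second branch, with the exponent $(1-p)/(2-p) = (1-p)/q$ appearing naturally. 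Taking the maximum of the two bounds covers both cases of which term ``wins'' in absorbing $\mu_3$, which completes the estimate.

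The main obstacle is really just spotting the key algebraic simplification: the hypothesis $p+q=2$ is precisely what allows $\tau\mu_1 V^p + \mu_2 V^q$ to factor as $V^p$ times a quadratic in $u = V^{1-p}$, which in turn is the only reason the integral reduces to an elementary $\arctan$ rather than a Gamma-function expression. Once this substitution is in hand, everything else — the two-way splitting, the treatment of the residual sets, and the $V(0) \to \infty$ limit giving the $\pi/2$ term — is routine and mirrors the structure already established in Lemma 1.
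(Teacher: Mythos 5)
Your argument is correct: the splitting into the two forms (6)--(7), the separation of variables, and the substitution $u=V^{1-p}$ (which works precisely because $q=2-p$ turns $\tau\mu_1V^p+\mu_2V^{2-p}$ into $V^p(\tau\mu_1+\mu_2u^2)$) yield exactly the two branches of (10) with the common prefactor $\bar T=\frac{1}{(1-p)\sqrt{\tau\mu_1\mu_2}}$. The paper states this lemma without proof, but your route is evidently the intended one: it is the $a=2$ special case of Lemma 3, whose $I_a$ reduces to $\arctan$ and whose $\bar T_3,\bar T_4$ both collapse to your $\bar T$, so nothing further is needed.
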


The result of \emph{Lemma 2} can be extended to the following more general forms:
\begin{lemma}
If $(a - 1)p + q = a$ with $a \ge 2,a \in {\bf{N^ + }}$, the settling time $T$ is approximated  by $T \le \max \left\{ {{T_3},{T_4}} \right\}$, where
\begin{equation}\small
\begin{aligned}
{T_3} = {\bar T_3}  \left( {\frac{\pi }{{a\sin \left( {\frac{\pi }{a}} \right)}} - {I_a}\left( {{{\left( {\frac{{{\mu _2}}}{{\tau {\mu _1}}}} \right)}^{\frac{1}{a}}}{{\left( {\frac{{{\mu _3}}}{{(1 - \tau ){\mu _1}}}} \right)}^{\frac{{1 - p}}{p}}}} \right) + {I_a}(0)} \right)
\end{aligned}
\end{equation}
\begin{equation}\small
\begin{aligned}
{T_4} = {\bar T_4} \left( {\frac{\pi }{{a\sin \left( {\frac{\pi }{a}} \right)}} - {I_a}\left( {{{\left( {\frac{{\tau {\mu _2}}}{{{\mu _1}}}} \right)}^{\frac{1}{a}}}{{\left( {\frac{{{\mu _3}}}{{(1 - \tau ){\mu _2}}}} \right)}^{\frac{{1 - p}}{q}}}} \right) + {I_a}(0)} \right)
\end{aligned}
\end{equation}
with ${\bar T_3} = \frac{1}{{(1 - p)\tau {\mu _1}}}{\left( {\frac{{\tau {\mu _1}}}{{{\mu _2}}}} \right)^{\frac{1}{a}}}$, ${\bar T_4} = \frac{1}{{(1 - p){\mu _1}}}{\left( {\frac{{{\mu _1}}}{{\tau {\mu _2}}}} \right)^{\frac{1}{a}}}$, ${\varphi _k} = \frac{\pi }{a}(2k - 1)$, ${k_T} = \bmod (a,2)$,
\begin{equation}\small
\begin{aligned}
{I_a}\left( x \right) = &\frac{k_T}{a}\ln (x + 1)+ \frac{2}{a}\sum\limits_{k = 1}^{\left\lfloor {\frac{a}{2}} \right\rfloor } {\left( {\sin {\varphi _k} \cdot \arctan \left( {\frac{{x - \cos {\varphi _k}}}{{\sin {\varphi _k}}}} \right)} \right)} \\
&- \frac{1}{a}\sum\limits_{k = 1}^{\left\lfloor {\frac{a}{2}} \right\rfloor } {\left( {\cos {\varphi _k} \cdot \ln \left( {{x^2} - 2x\cos {\varphi _k} + 1} \right)} \right)} 
\end{aligned}
\end{equation}
\end{lemma}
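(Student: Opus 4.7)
The plan is to extend the arctangent-based estimate of Lemma 2 (the case $a=2$) to arbitrary integer $a\ge 2$ by a higher-order substitution tailored to the algebraic identity implied by $(a-1)p+q=a$, namely $q-p=a(1-p)$. Following the proof of Lemma 1, I would split (3) in two ways, absorbing either $(1-\tau)\mu_1 V^p$ or $(1-\tau)\mu_2 V^q$ into $\mu_3$. Outside the corresponding layer of the residual set, where either $(1-\tau)\mu_1 V^p\ge\mu_3$ or $(1-\tau)\mu_2 V^q\ge\mu_3$ holds, this gives $\dot V\le -\tau\mu_1 V^p-\mu_2 V^q$ (Case 1) and $\dot V\le -\mu_1 V^p-\tau\mu_2 V^q$ (Case 2). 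Separating variables and integrating from the boundary $V_b$ of the residual set reduces the settling-time estimate to a definite integral of $1/(\mu_1 V^p+\mu_2 V^q)$.

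The next step is to massage that integral into a canonical form. I would substitute $y=V^{1-p}$, giving $\mathrm{d}V/V^p=\mathrm{d}y/(1-p)$ and, via $q-p=a(1-p)$, $V^{q-p}=y^a$. A further rescaling $z=(\mu_2/(\tau\mu_1))^{1/a}\,y$ in Case 1 (and the symmetric $z=(\tau\mu_2/\mu_1)^{1/a}\,y$ in Case 2) collapses the denominator to $\tau\mu_1(1+z^a)$ (respectively $\mu_1(1+z^a)$). Collecting the prefactors produces exactly $\bar T_3$ and $\bar T_4$, and the problem reduces to evaluating $\int_{z_b}^{\infty}\mathrm{d}z/(1+z^a)$, where $z_b$ is the image under the substitution of the residual-set threshold $V_b=(\mu_3/((1-\tau)\mu_1))^{1/p}$ or $V_b=(\mu_3/((1-\tau)\mu_2))^{1/q}$; a short computation shows these images are exactly the arguments of $I_a$ appearing in (11)--(12).

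The main obstacle is the closed-form antiderivative. I would factor $1+z^a=\prod_{k=1}^{a}(z-e^{i\varphi_k})$ with $\varphi_k=(2k-1)\pi/a$, pair conjugate roots into real quadratics $z^2-2z\cos\varphi_k+1$, and (when $a$ is odd) retain the single real linear factor $z+1$ accounting for the $k_T$-term. Standard partial fractions contribute, from each quadratic factor, an antiderivative of the form $\sin\varphi_k\cdot\arctan((z-\cos\varphi_k)/\sin\varphi_k)-\frac{1}{2}\cos\varphi_k\ln(z^2-2z\cos\varphi_k+1)$, which, after summing over $k$ and rescaling by $2/a$ and $1/a$ respectively, reproduces $I_a(z)$ as stated in (13). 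A direct differentiation verifies $I_a'(z)=1/(1+z^a)$, and the classical beta-function identity $\int_0^{\infty}\mathrm{d}z/(1+z^a)=\pi/(a\sin(\pi/a))$ then yields $\int_{z_b}^{\infty}\mathrm{d}z/(1+z^a)=\pi/(a\sin(\pi/a))-I_a(z_b)+I_a(0)$.

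Combining Cases 1 and 2 gives $T\le\max\{T_3,T_4\}$ with the expressions in the statement. As a sanity check, $a=2$ collapses $I_2$ to a single $\arctan$ term and $\pi/(a\sin(\pi/a))$ to $\pi/2$, recovering Lemma 2, while odd $a$ activates the $k_T=1$ logarithmic contribution $\frac{1}{a}\ln(z+1)$ from the real root $z=-1$.
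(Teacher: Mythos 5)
The paper states Lemma 3 without any proof (it is simply announced as an extension of Lemma 2), so there is nothing to compare against; your proposal in fact supplies the missing argument, and it is correct. The key identity $q-p=a(1-p)$ does follow from $(a-1)p+q=a$, the substitution $y=V^{1-p}$ followed by $z=(\mu_2/(\tau\mu_1))^{1/a}y$ (resp.\ $z=(\tau\mu_2/\mu_1)^{1/a}y$) reproduces $\bar T_3$, $\bar T_4$ and the stated arguments of $I_a$, and the partial-fraction computation with residues $-e^{i\varphi_k}/a$ at the roots of $z^a=-1$ yields exactly the $\arctan$, $\ln$ and $k_T$ terms of $I_a$ together with $\int_0^\infty dz/(1+z^a)=\pi/(a\sin(\pi/a))$. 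Your two-case split mirrors the paper's own proof of Lemma 1 (inequalities (6)--(7)), so the whole derivation is consistent with the framework the paper sets up; the only cosmetic remark is that the logarithmic term should strictly be $\ln|z+1|$, which is harmless since the integration range is $z\ge 0$.
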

\begin{lemma}
If $0 < p = {p_2}/{p_1} < 1,q = {q_2}/{q_1} > 1$, ${p_1},{p_2},{q_1},{q_2} \in {\bf{N^ + }}$, the settling time $T$ is estimated by $T \le \max \left\{ {{T_5},{T_6}} \right\}$, where
\begin{equation}\small
\begin{aligned}
{T_5} = \frac{{{c_T}}}{{\tau {\mu _1}\left( {q - p} \right)}}{\left( {\frac{{\tau {\mu _1}}}{{{\mu _2}}}} \right)^l} - {I_{ef}}\left( {{{\left( {\frac{{{\mu _3}}}{{(1 - \tau ){\mu _1}}}} \right)}^{\frac{1}{{{p_2}{q_1}}}}}} \right) + {I_{ef}}(0)
\end{aligned}
\end{equation}
\begin{equation}\small
\begin{aligned}
{T_6} = \frac{{{c_T}}}{{{\mu _1}\left( {q - p} \right)}}{\left( {\frac{{{\mu _1}}}{{\tau {\mu _2}}}} \right)^l} - {I_{ef}}\left( {{{\left( {\frac{{{\mu _3}}}{{(1 - \tau ){\mu _2}}}} \right)}^{\frac{1}{{{p_1}{q_2}}}}}} \right) + {I_{ef}}(0)
\end{aligned}
\end{equation}
with ${c_T} = \frac{\pi }{{\sin \left( {l\pi } \right)}}$, ${I_{ef}}(x) = \int {\frac{{{p_1}{q_1}{x^{{p_1}{q_1} - 1}}}}{{{\mu _1}{x^{{p_2}{q_1}}} + {\mu _2}{x^{{p_1}{q_2}}}}}} dx$ being an elementary function.
\end{lemma}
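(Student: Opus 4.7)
\emph{Proof plan.} The approach mirrors the proof of Lemma 1. Introducing $\tau \in (0,1)$ and rewriting (3) as in (6) or (7), outside the residual set (4) the constant $\mu_3$ is absorbed by either $(1-\tau)\mu_1 V^p$ or $(1-\tau)\mu_2 V^q$, leaving the two purely autonomous estimates $\dot V \le -\tau\mu_1 V^p - \mu_2 V^q$ and $\dot V \le -\mu_1 V^p - \tau\mu_2 V^q$. Separating variables in each estimate and integrating from the corresponding residual threshold up to $V \to +\infty$ produces two upper bounds on the settling time, whose maximum is $\max\{T_5, T_6\}$.

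The heart of the argument is to evaluate $\int_{V^*}^{+\infty} dV/(aV^p + bV^q)$ in closed form, which is precisely where the rationality of $p$ and $q$ is exploited. With $p = p_2/p_1$ and $q = q_2/q_1$, the substitution $V = x^{p_1 q_1}$ converts every fractional power into an integer exponent: $V^p = x^{p_2 q_1}$, $V^q = x^{p_1 q_2}$, and $dV = p_1 q_1 x^{p_1 q_1 - 1}\,dx$. Hence
\[
\int \frac{dV}{aV^p + bV^q} \;=\; \int \frac{p_1 q_1\, x^{p_1 q_1 - 1}}{a\, x^{p_2 q_1} + b\, x^{p_1 q_2}}\,dx \;=\; I_{ef}(x),
\]
and because the transformed integrand is a rational function, partial-fraction decomposition (over the complex numbers, regrouped into real linear and quadratic factors) guarantees that $I_{ef}$ is an elementary function, as the lemma asserts.

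For the upper limit I will invoke the classical Beta-function identity
\[
\int_0^{+\infty} \frac{dV}{aV^p + bV^q} \;=\; \frac{1}{a(q-p)}\Bigl(\frac{a}{b}\Bigr)^l \cdot \frac{\pi}{\sin(l\pi)},\qquad l = \frac{1-p}{q-p},
\]
obtained by rescaling $V$ and applying $\int_0^{+\infty} y^{s-1}/(1+y^r)\,dy = (\pi/r)/\sin(s\pi/r)$. This expression plays the role of $I_{ef}(+\infty)$, and for the two coefficient choices $(a,b) = (\tau\mu_1,\mu_2)$ and $(a,b) = (\mu_1,\tau\mu_2)$ it reproduces exactly the leading terms of $T_5$ and $T_6$. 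The lower endpoints are supplied by (4): $V^* = (\mu_3/((1-\tau)\mu_1))^{1/p}$ transforms to $x^* = (V^*)^{1/(p_1 q_1)} = (\mu_3/((1-\tau)\mu_1))^{1/(p_2 q_1)}$, and symmetrically for the second case, matching the arguments of $I_{ef}$ that appear in (14)--(15).

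The main obstacle I anticipate is bookkeeping: correctly propagating the residual thresholds and the mixed coefficient pairs $(\tau\mu_1,\mu_2)$ and $(\mu_1,\tau\mu_2)$ through the substitution $V = x^{p_1 q_1}$, and extracting the factors $(\tau\mu_1/\mu_2)^l$ and $(\mu_1/(\tau\mu_2))^l$ cleanly from the Beta-function value so that the result separates into a closed-form constant minus $I_{ef}$ evaluated at the transformed lower endpoint. Once this accounting is settled, elementariness of $I_{ef}$ and monotone convergence of the improper integral close the argument.
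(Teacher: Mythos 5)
The paper states this lemma without any accompanying proof, so there is nothing to compare against line by line; your plan reconstructs what is evidently the intended argument (the $\tau$-splitting of Lemma 1, absorption of $\mu_3$ outside the residual set, separation of variables, the substitution $V=x^{p_1q_1}$ to rationalize the integrand, and the reflection-formula evaluation of $\int_0^{+\infty}dV/(aV^p+bV^q)$), and the endpoint computations you indicate all check out: the leading constants and the transformed lower limits $(\mu_3/((1-\tau)\mu_1))^{1/(p_2q_1)}$ and $(\mu_3/((1-\tau)\mu_2))^{1/(p_1q_2)}$ are exactly right. The one item your ``bookkeeping'' must still resolve explicitly is a coefficient mismatch: the antiderivatives actually produced by your two cases carry the pairs $(a,b)=(\tau\mu_1,\mu_2)$ and $(a,b)=(\mu_1,\tau\mu_2)$, whereas the lemma's $I_{ef}$ is defined once and for all with $(\mu_1,\mu_2)$. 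The stated bound nevertheless survives because enlarging $\tau\mu_i$ to $\mu_i$ in the denominator only decreases the integral being subtracted; e.g.\ in the first case, since $\tau<1$,
\begin{equation*}
\int_0^{x^*}\frac{p_1q_1x^{p_1q_1-1}}{\tau\mu_1x^{p_2q_1}+\mu_2x^{p_1q_2}}\,dx \;\ge\; I_{ef}\left(x^*\right)-I_{ef}(0),
\end{equation*}
so $T_5$ (and symmetrically $T_6$) as written are valid, if slightly looser, upper bounds. With that one inequality inserted, your outline closes; everything else (forward invariance of the two sublevel sets so that the maximum of the two times suffices, integrability of $I_{ef}$ at $0$ since $q_1(p_1-p_2)\ge 1$, and elementariness via partial fractions of the rationalized integrand) is standard and correctly anticipated.
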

\subsection{Lemmas}
\begin{lemma} [\cite{lemma2}]
For any variables ${\psi _1},{\psi _2} \in {\bf{R}}$, and constants ${\gamma _1} > 0,{\gamma _2} > 0$, we have
\begin{equation}
{\left| {{\psi _1}} \right|^{{\gamma _1}}}{\left| {{\psi _2}} \right|^{{\gamma _2}}} \le \frac{{{\gamma _1}}}{{{\gamma _1}{\rm{ + }}{\gamma _2}}}{\left| {{\psi _1}} \right|^{{\gamma _1}{\rm{ + }}{\gamma _2}}}{\rm{ + }}\frac{{{\gamma _2}}}{{{\gamma _1}{\rm{ + }}{\gamma _2}}}{\left| {{\psi _2}} \right|^{{\gamma _1}{\rm{ + }}{\gamma _2}}}
\end{equation}
\end{lemma}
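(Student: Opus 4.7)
The plan is to recognize this inequality as a direct application of Young's inequality with a judicious choice of conjugate exponents. First I would introduce $p = (\gamma_1+\gamma_2)/\gamma_1$ and $q = (\gamma_1+\gamma_2)/\gamma_2$; these are both strictly greater than $1$ because $\gamma_1,\gamma_2 > 0$, and they satisfy the H\"older-conjugate relation $\tfrac{1}{p}+\tfrac{1}{q} = 1$ by a one-line computation.

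Next, I would invoke the classical Young inequality $xy \le x^p/p + y^q/q$ with the substitutions $x = |\psi_1|^{\gamma_1}$ and $y = |\psi_2|^{\gamma_2}$. The exponents line up so that $x^p = |\psi_1|^{\gamma_1+\gamma_2}$ and $y^q = |\psi_2|^{\gamma_1+\gamma_2}$, while $1/p = \gamma_1/(\gamma_1+\gamma_2)$ and $1/q = \gamma_2/(\gamma_1+\gamma_2)$ are precisely the coefficients appearing on the right-hand side of the claim. Substituting then yields the inequality directly; the degenerate cases $\psi_1 = 0$ or $\psi_2 = 0$ are handled separately by noting that the left-hand side vanishes while the right-hand side is nonnegative.

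An equivalent and arguably cleaner route is to apply the two-term weighted arithmetic--geometric mean inequality $u^{\lambda} v^{1-\lambda} \le \lambda u + (1-\lambda) v$ to the choices $u = |\psi_1|^{\gamma_1+\gamma_2}$, $v = |\psi_2|^{\gamma_1+\gamma_2}$, and $\lambda = \gamma_1/(\gamma_1+\gamma_2)\in(0,1)$. The exponents once again collapse to the required form via $u^{\lambda} = |\psi_1|^{\gamma_1}$ and $v^{1-\lambda} = |\psi_2|^{\gamma_2}$, and weighted AM--GM itself follows from the concavity of $\ln$ (Jensen's inequality), providing a self-contained derivation if desired.

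There is essentially no obstacle to overcome: the result is a textbook Young-type inequality, and the only care required is the bookkeeping for the conjugate exponents. The verifications $p,q > 1$ and $1/p + 1/q = 1$ are both immediate consequences of $\gamma_1,\gamma_2 > 0$, and since the lemma is already cited to the literature in the statement, I would simply defer to that reference in the final write-up rather than reproducing the elementary calculation.
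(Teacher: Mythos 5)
Your proof is correct: the inequality is exactly Young's inequality with conjugate exponents $p=(\gamma_1+\gamma_2)/\gamma_1$, $q=(\gamma_1+\gamma_2)/\gamma_2$, or equivalently weighted AM--GM with weight $\lambda=\gamma_1/(\gamma_1+\gamma_2)$, and both of your routes check out, including the degenerate case where one variable vanishes. The paper itself gives no proof and simply cites the literature for this lemma, so your derivation matches the standard argument that the cited reference relies on; deferring to the citation, as you suggest, is exactly what the authors do.
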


\begin{lemma} [\cite{lemma3}]
 For ${\chi _k} \in {\bf{R}},k = 1,2, \cdots ,s,0 < h \le 1$, one has
\begin{equation}
{\left( {\sum\limits_{k = 1}^s {\left| {{\chi _k}} \right|} } \right)^h} \le \sum\limits_{k = 1}^s {{{\left| {{\chi _k}} \right|}^h} \le {s^{1 - h}}} {\left( {\sum\limits_{k = 1}^s {\left| {{\chi _k}} \right|} } \right)^h}
\end{equation}
\end{lemma}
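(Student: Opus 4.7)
The plan is to handle the two bounds separately since neither implies the other, and to reduce each to a familiar one-variable fact. For the left-hand inequality $\bigl(\sum_{k=1}^s |\chi_k|\bigr)^h \le \sum_{k=1}^s |\chi_k|^h$, I would first establish the two-term subadditivity $(a+b)^h \le a^h + b^h$ for $a,b \ge 0$. Normalising by $(a+b)^h$ and setting $t = a/(a+b) \in [0,1]$ reduces this to verifying $t^h + (1-t)^h \ge 1$, which holds because $x^h \ge x$ on $[0,1]$ whenever $0 < h \le 1$. An induction on $s$, grouping the first $s-1$ summands and applying the two-term case, then yields the $s$-term bound.

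For the right-hand inequality $\sum_{k=1}^s |\chi_k|^h \le s^{1-h}\bigl(\sum_{k=1}^s |\chi_k|\bigr)^h$, the natural tool is Jensen's inequality applied to the concave function $\phi(t) = t^h$ on $[0,\infty)$. Averaging the $s$ values gives $\tfrac{1}{s}\sum_k |\chi_k|^h \le \bigl(\tfrac{1}{s}\sum_k |\chi_k|\bigr)^h$, and multiplying through by $s$ delivers the advertised factor $s^{1-h}$. An equivalent derivation uses H\"{o}lder's inequality with conjugate exponents $1/h$ and $1/(1-h)$ applied to the pair $(|\chi_k|^h, 1)$, producing the same bound without explicitly invoking concavity.

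I do not foresee a real obstacle. The boundary case $h = 1$ collapses every inequality to an identity; the case in which some $\chi_k$ vanish is automatic since $0^h = 0$; and the fully degenerate case in which every $\chi_k$ vanishes is trivial. The only thing to be careful about is keeping the direction of Jensen correct --- it is the concavity of $t \mapsto t^h$ on $[0,\infty)$ for $0 < h \le 1$, not convexity, that is being exploited --- after which the remainder of the argument is entirely routine and the two halves combine to give the stated double inequality.
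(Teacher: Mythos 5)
Your proof is correct. The paper does not prove this lemma at all --- it is quoted verbatim from the cited reference as a known auxiliary result --- so there is no in-paper argument to compare against; your two-part derivation (subadditivity of $t\mapsto t^h$ via the normalisation $t^h+(1-t)^h\ge 1$ and induction for the lower bound, Jensen/H\"{o}lder for the concave power for the upper bound) is the standard proof of this power-mean inequality and is complete as outlined.
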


\begin{lemma} [\cite{2021arXiv}]
For any variable $w \in {\bf{R}}$ and constants $\delta >0, \varepsilon > 0$, it holds that
\begin{equation}
0 \le \left| w \right| - {w^2}\sqrt {\frac{{{w^2} + {\delta^2}+{\varepsilon ^2}}}{{\left( {{w^2} + {\varepsilon ^2}} \right)\left( {{w^2} + {\delta^2} } \right)}}}  < \frac{{\varepsilon \delta }}{{\sqrt {{\varepsilon ^2} + {\delta ^2}} }}  
\end{equation}
\end{lemma}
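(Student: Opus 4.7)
My plan is to reduce the two-sided bound to a short algebraic identity followed by one application of AM-GM. Since the centered expression depends on $w$ only through $w^2$ and $|w|$, it is even in $w$, so I will assume $w \ge 0$ throughout. The critical observation is the elementary identity $(w^2+\varepsilon^2)(w^2+\delta^2) - w^2(w^2+\delta^2+\varepsilon^2) = \varepsilon^2\delta^2$, which rewrites the fraction under the radical as
\begin{equation*}
\frac{w^2(w^2+\delta^2+\varepsilon^2)}{(w^2+\varepsilon^2)(w^2+\delta^2)} = 1 - \alpha, \quad \alpha = \frac{\varepsilon^2\delta^2}{(w^2+\varepsilon^2)(w^2+\delta^2)} \in (0,1].
\end{equation*}
Pulling $w$ out of the square root turns the whole expression into $w\bigl(1-\sqrt{1-\alpha}\bigr)$. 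The lower bound is then immediate: since $\alpha \in (0,1]$ forces $\sqrt{1-\alpha} \in [0,1]$, the quantity is nonnegative and vanishes exactly at $w=0$.

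For the upper bound I would rationalize, writing $1-\sqrt{1-\alpha} = \alpha/\bigl(1+\sqrt{1-\alpha}\bigr) \le \alpha$, which yields
\begin{equation*}
|w| - w^{2}\sqrt{\frac{w^{2}+\delta^{2}+\varepsilon^{2}}{(w^{2}+\varepsilon^{2})(w^{2}+\delta^{2})}} \le \frac{w\,\varepsilon^{2}\delta^{2}}{(w^{2}+\varepsilon^{2})(w^{2}+\delta^{2})}.
\end{equation*}
Now AM-GM applied to the two cross terms in the expanded denominator gives $w^{2}(\varepsilon^{2}+\delta^{2}) + \varepsilon^{2}\delta^{2} \ge 2w\varepsilon\delta\sqrt{\varepsilon^{2}+\delta^{2}}$, and discarding the nonnegative $w^{4}$ term then produces $(w^{2}+\varepsilon^{2})(w^{2}+\delta^{2}) \ge 2w\varepsilon\delta\sqrt{\varepsilon^{2}+\delta^{2}}$. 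Substituting back, the right-hand side of the previous display is bounded by $\varepsilon\delta/\bigl(2\sqrt{\varepsilon^{2}+\delta^{2}}\bigr)$, which is strictly smaller than the claimed $\varepsilon\delta/\sqrt{\varepsilon^{2}+\delta^{2}}$; the case $w=0$ is trivial, and for $w>0$ the AM-GM step is strict once the dropped $w^{4}$ is taken into account.

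I do not expect a genuine obstacle here: the whole argument is a one-shot algebraic rearrangement followed by AM-GM, and it leaves a factor of two to spare in the final constant. The only place one could get stuck is spotting the difference-of-products identity that converts the radical into $\sqrt{1-\alpha}$; once that identity is in hand, rationalization and AM-GM finish the proof without any delicate estimates.
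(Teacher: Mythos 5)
Your proof is correct. The paper states this lemma without proof (it is imported from a cited reference), so there is nothing internal to compare against; your argument is self-contained and sound: the identity $(w^2+\varepsilon^2)(w^2+\delta^2)-w^2(w^2+\delta^2+\varepsilon^2)=\varepsilon^2\delta^2$ does convert the radicand to $1-\alpha$, the rationalization $1-\sqrt{1-\alpha}\le\alpha$ is valid, and the AM--GM step $w^2(\varepsilon^2+\delta^2)+\varepsilon^2\delta^2\ge 2w\varepsilon\delta\sqrt{\varepsilon^2+\delta^2}$ is exactly what is needed. As a bonus, your chain actually establishes the sharper bound $\varepsilon\delta/\bigl(2\sqrt{\varepsilon^2+\delta^2}\bigr)$, which makes the strictness of the stated inequality automatic rather than requiring a separate equality-case discussion.
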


Enlightened by \emph{Lemma 7}, we can also develop the following inequalities:
\begin{lemma}
For any variable $v \in {\bf{R}}$ and positive constant $\delta_v$, we have
\begin{equation}
0 \le \left| v \right| - {v^2}\sqrt {\frac{{{v^2} + \delta _v^2}}{{{v^4} + {v^2}\delta _v^2 + \delta _v^4}}}  < 0.2576\cdot{\delta _v}
\end{equation}
\begin{equation}
0 \le \left| v \right| - \frac{2}{\pi }v \cdot \arctan \left( {\frac{v}{{{\delta _v}}}} \right) < \frac{2}{\pi }\cdot{\delta _v}
\end{equation}
\end{lemma}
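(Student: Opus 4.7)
The plan is to handle the two inequalities separately; in each case I would exploit evenness in $v$ to restrict to $v \geq 0$ and rescale by $\delta_v$ so that the bound reduces to the supremum of a single-variable function of $t = v/\delta_v$.

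For the lower bound of the first inequality, I would simply square both sides of $|v| \geq v^2\sqrt{(v^2+\delta_v^2)/(v^4+v^2\delta_v^2+\delta_v^4)}$; after clearing the radical, the difference collapses to $v^2\delta_v^4 \geq 0$, which is immediate. For the upper bound, setting $r = v^2/\delta_v^2 \geq 0$ factors the expression as $\delta_v \cdot G(r)$ with $G(r) = \sqrt{r} - r\sqrt{(r+1)/(r^2+r+1)}$. I would note that $G$ is continuous, $G(0) = 0$, and $G(r) \to 0$ as $r \to \infty$ (using $1 - \sqrt{x/(x+1)} = O(1/x)$), so its supremum is attained at an interior critical point. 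Computing $G'(r) = 0$ and clearing radicals would yield a polynomial (quintic) equation in $r$ with a unique positive root $r^\ast \approx 0.30$, and numerically evaluating $G(r^\ast)$ would produce the advertised constant $0.2576$.

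For the second inequality I expect a cleaner, fully algebraic proof. With $v > 0$ the lower bound follows from $\arctan(v/\delta_v) \in [0,\pi/2)$, which makes $1 - (2/\pi)\arctan(v/\delta_v) > 0$. For the upper bound I would invoke the identity $\arctan x + \arctan(1/x) = \pi/2$ for $x > 0$ to rewrite $|v| - (2/\pi)\,v\arctan(v/\delta_v) = (2v/\pi)\arctan(\delta_v/v)$, and then apply $\arctan x < x$ for $x > 0$ with $x = \delta_v/v$ to obtain $(2v/\pi)(\delta_v/v) = 2\delta_v/\pi$. The case $v < 0$ follows by evenness of the whole expression in $v$, and $v = 0$ is trivial.

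The main obstacle is the first inequality, because $0.2576$ is not a closed-form quantity but the numerical value of $\sup_{r \geq 0} G(r)$. The real work is therefore algebraic: reducing $G'(r) = 0$ to a polynomial and verifying that it admits exactly one positive root on $(0,\infty)$, so that the numerically computed maximum is in fact the global supremum of $G$ rather than only a local one. In contrast, the second bound needs only the $\arctan$ addition formula together with the elementary inequality $\arctan x < x$.
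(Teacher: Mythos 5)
The paper never proves this lemma: it is stated bare, introduced only by the remark that it is ``enlightened by'' Lemma~7, which is itself imported from an external reference. So there is no in-paper argument to compare against, and your proposal actually supplies what the paper omits. Your outline is correct on both counts. For the first inequality, squaring the lower bound does reduce to $v^2\delta_v^4\ge 0$, and the rescaling $r=v^2/\delta_v^2$ gives $\delta_v G(r)$ with $G(r)=\sqrt r-r\sqrt{(r+1)/(r^2+r+1)}$; clearing radicals in $G'(r)=0$ yields the quintic $3r^5+7r^4+11r^3+7r^2-1=0$, which by Descartes' rule of signs (one sign change) has exactly one positive root, $r^\ast\approx 0.301$, so your worry about global versus local maximum is disposed of cleanly. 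For the second inequality, the reflection identity $\arctan x+\arctan(1/x)=\pi/2$ together with $\arctan x<x$ is exactly the right mechanism, and evenness handles $v<0$.

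The one point you should tighten is the constant $0.2576$. The supremum of $G$ is attained (interior critical point) and equals approximately $0.2575977$, so the strict inequality $<0.2576\,\delta_v$ holds only because this numerical value is \emph{strictly} below $0.2576$; if the supremum were exactly the advertised constant, the strict inequality would fail at the maximizer. A rigorous write-up therefore needs a certified bound $G(r^\ast)<0.2576$ (e.g.\ an interval enclosure of the root of the quintic plus monotonicity of $G$ on either side), not merely ``numerically evaluating $G(r^\ast)$ produces $0.2576$.'' With that caveat addressed, the proof is complete.
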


\section{Main Results}
This section will provide the detailed design process of the elevation angle tracking control.
\subsection{New UBF Design}
The following PFs \cite{PF} are adopted as constraints on the tracking error ${e_1} = {x_1} - {x_{1d}}$
\begin{equation}
{K_l}\left( t \right) =
{\begin{cases}
\left( {e_1(0) - \Delta  + {e _\infty }} \right)\frac{{{T_s} - t}}{{{T_s}}}{e^{\left( {1 - \frac{{{T_s}}}{{{T_s} - t}}} \right)}} - {e _\infty },t \in \left[ {0,{T_s}} \right) \hfill \\
-{e _\infty },t \in \left[ {{T_s}, + \infty } \right)\hfill \\
\end{cases} }
\end{equation}
\begin{equation}
{K_u}\left( t \right) =
{\begin{cases}
\left( {e_1(0) + \bar \Delta  - \bar e _\infty } \right)\frac{{{T_s} - t}}{{{T_s}}}{e^{\left( {1 - \frac{{{T_s}}}{{{T_s} - t}}} \right)}} + \bar e _\infty,t \in \left[ {0,{T_s}} \right) \hfill \\
\bar e _\infty,t \in \left[ {{T_s}, + \infty } \right)\hfill \\
\end{cases} }
\end{equation}
where ${T_s} > 0,{e_\infty } > 0, \Delta  > 0, \bar e _\infty > 0, \bar \Delta  > 0$ need to be designed.

Moreover, inspired by \cite{PF}, we can also design the following novel PFs with the same properties as (21), (22):
i)
\begin{equation}
{K_{1l}}\left( t \right) =
{\begin{cases}
\frac{{\left( {{e_1}\left( 0 \right) - \Delta  + {e_\infty }} \right)}}{{{c_{1l}}}}{\mathop{\rm sech}\nolimits} \left( {\frac{{{a_{l}}t}}{{{T_s} - t}} + {b_{l}}} \right)- {e _\infty },t \in \left[ {0,{T_s}} \right) \hfill \\
-{e _\infty },t \in \left[ {{T_s}, + \infty } \right)\hfill \\
\end{cases} }
\end{equation}
\begin{equation}
{K_{1u}}\left( t \right) =
{\begin{cases}
\frac{{\left( {{e_1}\left( 0 \right) + \bar \Delta  - {{\bar e}_\infty }} \right)}}{{{c_{1u}}}}{\mathop{\rm sech}\nolimits} \left( {\frac{{{a_{u}}t}}{{{T_s} - t}} + {b_{u}}} \right) + \bar e _\infty,t \in \left[ {0,{T_s}} \right) \hfill \\
\bar e _\infty,t \in \left[ {{T_s}, + \infty } \right)\hfill \\
\end{cases} }
\end{equation}
ii)
\begin{equation}
{K_{2l}}\left( t \right) =
{\begin{cases}
\frac{{\left( {{e_1}\left( 0 \right) - \Delta  + {e_\infty }} \right)}}{{{c_{2l}}}}{\mathop{\rm csch}\nolimits} \left( {\frac{{{a_{l}}t}}{{{T_s} - t}} + {b_{l}}} \right)- {e _\infty },t \in \left[ {0,{T_s}} \right) \hfill \\
-{e _\infty },t \in \left[ {{T_s}, + \infty } \right)\hfill \\
\end{cases} }
\end{equation}
\begin{equation}
{K_{2u}}\left( t \right) =
{\begin{cases}
\frac{{\left( {{e_1}\left( 0 \right) + \bar \Delta  - {{\bar e}_\infty }} \right)}}{{{c_{2u}}}}{\mathop{\rm csch}\nolimits} \left( {\frac{{{a_{u}}t}}{{{T_s} - t}} + {b_{u}}} \right) + \bar e _\infty,t \in \left[ {0,{T_s}} \right) \hfill \\
\bar e _\infty,t \in \left[ {{T_s}, + \infty } \right)\hfill \\
\end{cases} }
\end{equation}
iii)
\begin{equation}\small
{K_{3l}}\left( t \right) =
{\begin{cases}
\frac{{\left( {{e_1}\left( 0 \right) - \Delta  + {e_\infty }} \right)}}{{{c_{3l}}}}\left[ {\coth \left( {\frac{{{a_l}t}}{{{T_s} - t}} + {b_l}} \right) - 1} \right]- {e _\infty },t \in \left[ {0,{T_s}} \right) \hfill \\
-{e _\infty },t \in \left[ {{T_s}, + \infty } \right)\hfill \\
\end{cases} }
\end{equation}
\begin{equation}\small
{K_{3u}}\left( t \right) =
{\begin{cases}
\frac{{\left( {{e_1}\left( 0 \right) + \bar \Delta  - {{\bar e}_\infty }} \right)}}{{{c_{3u}}}}\left[ {\coth \left( {\frac{{{a_l}t}}{{{T_s} - t}} + {b_l}} \right) - 1} \right] + \bar e _\infty,t \in \left[ {0,{T_s}} \right) \hfill \\
\bar e _\infty,t \in \left[ {{T_s}, + \infty } \right)\hfill \\
\end{cases} }
\end{equation}
where ${c_{1l}} = \frac{{2\exp \left( {{b_l}} \right)}}{{\exp \left( {2{b_l}} \right) + 1}},{c_{1u}} = \frac{{2\exp \left( {{b_u}} \right)}}{{\exp \left( {2{b_u}} \right) + 1}},{c_{2l}} = \frac{{2\exp \left( {{b_l}} \right)}}{{\exp \left( {2{b_l}} \right) - 1}},{c_{2u}} = \frac{{2\exp \left( {{b_u}} \right)}}{{\exp \left( {2{b_u}} \right) - 1}},{c_{3l}} = \frac{2}{{\exp \left( {2{b_l}} \right) - 1}},{c_{3u}} = \frac{2}{{\exp \left( {2{b_u}} \right) - 1}}$. $ a_l > 0, b_l > 0, a_u > 0, b_u > 0$ need to be designed.

To achieve ${K_l}\left( t \right) < {e_1} < {K_u}\left( t \right)$, a new UBF is developed as follows
\begin{equation}
{z_1} = \frac{{{c_1}}}{{{{\left( {{K_l}(t) - {e_1}} \right)}^m}}} + \frac{{{c_2}}}{{{{\left( {{K_u}(t) - {e_1}} \right)}^n}}}
\end{equation}
where $m = {m_2}/{m_1},n = {n_2}/{n_1}$, ${m_1},{m_2},{n_1},{n_2}$ are positive odd integers, and ${c_1} > 0,{c_2} > 0$ are suitable designed parameters.

It is easy to verify that when ${e_1} \to {K_l}\left( t \right),{z_1} \to  - \infty$ and ${e_1} \to {K_u}\left( t \right),{z_1} \to  + \infty$. Therefore, if $z_1$ is bounded, the tracking error can stay in the prescribed domain.

Furthermore, we construct the following new UBF, which can handle both restrained and unrestrained cases in a unified control structure.
\begin{equation}
{z_1} = \frac{{{{\left( {{K_l}(t) - {e_1}} \right)}^m}{e_1} + {c_1}}}{{2{{\left( {{K_l}(t) - {e_1}} \right)}^m}}} + \frac{{{{\left( {{K_u}(t) - {e_1}} \right)}^n}{e_1} + {c_2}}}{{2{{\left( {{K_u}(t) - {e_1}} \right)}^n}}}
\end{equation}
\begin{remark}
By selecting appropriate designed parameters, the newly developed UBFs can effectively diminish the consumption of the control input energy caused by the introduction of constraints.
\end{remark}
\subsection{Control Law Design}
Differentiating $z_1$ yields:
\begin{equation}
{\dot z_1} = {\eta _1}{\dot e_1} + {\eta _2}
\end{equation}

Then the tracking errors with compensation signals ${\zeta _1},{\zeta _2}$ are developed as
\begin{equation}
\begin{aligned}
&{w_1} = {z_1} - {\zeta _1}\\
&{w_2} = {x_2} - {x_{1,f}} - {\zeta _2}
\end{aligned}
\end{equation}
where ${x_{1,f}}$ is acquired from the following FxT differentiator \cite{2011Uniform} with the non-singular virtual control law ${\alpha _1}$ as input.
\begin{equation}
\begin{aligned}
&{{\dot x}_{1,f}} =  - {k_{1f}}{\phi _1}({\sigma _f}) + {x_{2,f}}\\
&{{\dot x}_{2,f}} =  - {k_{2f}}{\phi _2}({\sigma _f})
\end{aligned}
\end{equation}
where
\begin{equation}
\begin{aligned}
&{\phi _1}\left( {{\sigma _f}} \right) = {\mathop{\rm sig}}{\left( {{\sigma _f}} \right)^{\frac{1}{2}}} + {\mu _f}{\mathop{\rm sig}}{\left( {{\sigma _f}} \right)^{\frac{3}{2}}}\\
&{\phi _2}\left( {{\sigma _f}} \right) = \frac{1}{2}{\mathop{\rm sgn}} \left( {{\sigma _f}} \right) + 2{\mu _f}{\sigma _f} + \frac{3}{2}\mu _f^2{\mathop{\rm sig}}{\left( {{\sigma _f}} \right)^2}
\end{aligned}
\end{equation}
with ${\sigma _f} = {x_{1,f}} - {\alpha _1},{\mu _f} > 0$.

The compensation variables ${\zeta _1},{\zeta _2}$ are introduced as below [17]
\begin{equation}
\begin{aligned}
&{{\dot \zeta }_1} =  - {k_{11}}\zeta _1^p - {k_{12}}\zeta _1^q + {\eta _1}\left( {{x_{1,f}} - {\alpha _1}} \right) + {\eta _1}{\zeta _2}\\
&{{\dot \zeta }_2} =  - {k_{21}}\zeta _2^p - {k_{22}}\zeta _2^q - {\eta _1}{\zeta _1}
\end{aligned}
\end{equation}
where $1/2 < p = {p_2}/{p_1} < 1,q = {q_2}/{q_1} > 1$, ${p_1},{p_2},{q_1},{q_2}$ are positive odd integers, and ${k_{11}} > 0,{k_{12}} > 0,{k_{21}} > 0,{k_{22}} > 0$ are suitable designed parameters.

Then the novel non-singular virtual control law ${\alpha _1}$ and FxT control signal ${\bar u_1}$ are constructed as follows
\begin{equation}
\begin{aligned}
{\alpha _1} = &{{\dot x}_{1d}} - \frac{1}{{{\eta _1}}}\left( {{\eta _2}{\rm{ + }}{k_{12}}w_1^q}\right.\\
 &\phantom{=\;\;}\left.{
+{k_{11}}w_1^{1 + 2p}\sqrt {\frac{{w_1^{2 + 2p} + \delta _1^2 + \varepsilon _1^2}}{{\left( {w_1^{2 + 2p} + \varepsilon _1^2} \right)\left( {w_1^{2 + 2p} + \delta _1^2} \right)}}} } \right)\\
{{\bar u}_1} = &\frac{{{J_\alpha }}}{{{L_a}}}\left( { - {k_{21}}w_2^p - {k_{22}}w_2^q + \frac{g}{{{J_\alpha }}}{m_e}{L_a}\cos ({x_1}) }\right.\\
 &\phantom{=\;\;}\left.{+ {{\dot x}_{1,f}} - {\eta _1}{z_2} - {{\hat \Omega }_d}{w_2}\sqrt {\frac{{w_2^2 + \delta _2^2 + \varepsilon _2^2}}{{\left( {w_2^2 + \varepsilon _2^2} \right)\left( {w_2^2 + \delta _2^2} \right)}}} } \right)
\end{aligned}
\end{equation}
where ${\hat \Omega _d}$ is the modified adaptive term to compensate the disturbance, which is developed as below 
\begin{equation}
{\dot {\hat \Omega} } = {\lambda _1}\left[ {w_2^2\sqrt {\frac{{w_2^2 + \delta _2^2 + \varepsilon _2^2}}{{\left( {w_2^2 + \varepsilon _2^2} \right)\left( {w_2^2 + \delta _2^2} \right)}}}  - {\lambda _2}{{\hat \Omega }_d} - {\lambda _3}\hat \Omega _d^q} \right]
\end{equation}
where ${\lambda _1} > 0,{\lambda _2} > 0,{\lambda _3} > 0$ need to be designed.

\subsection{Stability Analysis}
\begin{theorem}
Considering the elevation channel of (1) with \emph{Assumptions 1-2}, the UBF (29) or (30), the FxT differentiator (33) with compensation mechanism (35), the adaptive compensation term (37), the virtual control law and controller (36), the attitude tracking error converges to a predefined region within a designer-defined time while the whole variables of the system are FxT bounded. 
\end{theorem}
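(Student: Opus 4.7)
The natural strategy is a composite Lyapunov argument that fits the hypothesis of \emph{Lemma 1} and then translates the resulting FxT bound on the Lyapunov function back to the original tracking error through the UBF construction. I would take
\begin{equation*}
V=\tfrac{1}{2}\bigl(w_1^2+w_2^2+\zeta_1^2+\zeta_2^2\bigr)+\tfrac{1}{2\lambda_1}\tilde{\Omega}_d^2,
\end{equation*}
where $\tilde{\Omega}_d=\hat{\Omega}_d-\Omega_d$ and $\Omega_d$ is the (unknown) upper bound of $d_1$ guaranteed by \emph{Assumption 2}.

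First I would derive the closed-loop error dynamics by substituting the differentiator (33), the compensators (35), and the laws (36) into (31)--(32). The compensation mechanism is engineered precisely so that three independent cancellations occur: (i) the differentiator error $\eta_1(x_{1,f}-\alpha_1)$ and the coupling $\eta_1\zeta_2$ appearing in $\dot{z}_1-\dot{\zeta}_1$ cancel with their counterparts inside $\dot{\zeta}_1$, leaving $\dot{w}_1$ free of the differentiator mismatch; (ii) the $\eta_1\zeta_1\zeta_2$ produced by $\zeta_1\dot{\zeta}_1$ cancels the antisymmetric $-\eta_1\zeta_1\zeta_2$ from $\zeta_2\dot{\zeta}_2$; and (iii) the virtual-control feedback term inside $\bar{u}_1$ cancels the $\eta_1 w_1w_2$ cross term between $w_1\dot{w}_1$ and $w_2\dot{w}_2$. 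After these cancellations only the designed damping terms $-k_{11}w_1^{1+2p}\sqrt{\cdot}$, $-k_{12}w_1^{1+q}$, $-k_{21}w_2^{1+p}$, $-k_{22}w_2^{1+q}$, $-k_{i1}|\zeta_i|^{1+p}$, $-k_{i2}|\zeta_i|^{1+q}$, and the disturbance/adaptation channel remain.

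Next I would invoke \emph{Lemma 7} with $v=w_1^{1+p}$ and with $v=w_2$ to convert the singularity-avoidance factors $w_1^{2+2p}\sqrt{\cdot}$ and $w_2^2\sqrt{\cdot}$ into $|w_1|^{1+p}$ and $|w_2|$ up to the residuals $\varepsilon_i\delta_i/\sqrt{\varepsilon_i^2+\delta_i^2}$. Writing $\hat{\Omega}_d=\tilde{\Omega}_d+\Omega_d$ in the adaptive-augmented term of $\bar{u}_1$ together with the modified adaptation law (37) produces a clean adaptive residual $-\lambda_2\tilde{\Omega}_d\hat{\Omega}_d-\lambda_3\tilde{\Omega}_d\hat{\Omega}_d^q$, to which \emph{Lemma 5} with $(\gamma_1,\gamma_2)=(1,1)$ and $(1,q)$ yields contributions of the form $-c_1\tilde{\Omega}_d^2-c_2\tilde{\Omega}_d^{1+q}+r_\Omega$. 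Any stray cross products such as $k_{11}|w_1||\zeta_1|^p$ arising from the compensator signs are again dominated via \emph{Lemma 5} and absorbed into the negative $-k_{i1}|w_1|^{1+p}$ and $-k_{i1}|\zeta_i|^{1+p}$ terms. Collecting all five channels and using \emph{Lemma 6} twice, once with $h=(1+p)/2\in(3/4,1)$ and once with $h=(1+q)/2>1$ applied to the five summands of $V$, produces a differential inequality of exactly the form
\begin{equation*}
\dot{V}\le -\mu_1 V^{(1+p)/2}-\mu_2 V^{(1+q)/2}+\mu_3,
\end{equation*}
with $\mu_3$ a finite residual depending on $\varepsilon_i,\delta_i,\Omega_d$. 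Since $(1+p)/2\in(0,1)$ and $(1+q)/2>1$, \emph{Lemma 1} (or the sharper \emph{Lemmas 2--4} whenever the exponent constraints $p+q=2$ or $(a-1)p+q=a$ are met) immediately delivers FxT boundedness of $V$, and hence of $w_i$, $\zeta_i$, $\tilde{\Omega}_d$, together with an explicit residual set and settling-time bound.

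The step I expect to require the most care is the adaptive-term bookkeeping, because fitting $-\lambda_3\tilde{\Omega}_d\hat{\Omega}_d^q$ into a usable $-c_2\tilde{\Omega}_d^{1+q}$ rather than into a stray $\tilde{\Omega}_d^{2q}$ or $\hat{\Omega}_d^{q+1}$ forces a specific pairing of exponents in \emph{Lemma 5}, and simultaneously constrains the admissible range of $\lambda_2,\lambda_3$ so that $\mu_3$ remains finite and the $V^{(1+q)/2}$ coefficient stays strictly positive. Once FxT boundedness of $V$ is in hand, the closing step is short: boundedness of $w_1$ and $\zeta_1$ keeps $z_1$ finite for all $t$, so the UBF (29) or (30) rules out $e_1$ ever touching either envelope $K_l(t),K_u(t)$; since $K_l(t)\to-e_\infty$ and $K_u(t)\to\bar{e}_\infty$ as $t\to T_s$, the tracking error is driven into the prescribed steady-state domain before the user-defined time $T_s$ and stays there thereafter, which completes the proof.
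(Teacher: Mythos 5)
Your proposal is correct and follows essentially the same route as the paper: the identical composite Lyapunov function (the paper writes the disturbance-bound error as $\hat{\Omega}_d-\bar{d}_1$, you as $\tilde{\Omega}_d$), reduction via Lemmas 5--7 to the inequality $\dot{V}\le-\mu_1V^{(1+p)/2}-\mu_2V^{(1+q)/2}+\mu_3$, application of Lemma 1 for practical FxT boundedness of $w_i,\zeta_i$, and the final step $|z_1|\le|w_1|+|\zeta_1|$ combined with the UBF barrier property to confine $e_1$ between $K_l(t)$ and $K_u(t)$. The paper's own proof is merely a terser version of the same argument (it states the resulting bound on $\dot{V}$ directly and restricts attention to $t>t_1$ after the differentiator has converged), so your additional bookkeeping of the cancellations and of the adaptive residual fills in steps the paper omits rather than deviating from it.
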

\begin{proof}
The Lyapunov function is constructed as below
\begin{equation}
V = \frac{1}{2} \lambda _1^{ - 1}{\left( {{{\hat \Omega }_d} - {{\bar d}_1}} \right)^2} + \frac{1}{2}\sum\limits_{i = 1}^2 {\left( {w_i^2 + \zeta _i^2} \right)} 
\end{equation}

Then when $t > {t_1}$, the derivative of $V$ with respect to time is calculated as
\begin{equation}
\begin{aligned}
\dot V \le & - \left[ {\left( {\frac{{{k_{11}}}}{{1 + p}} - \frac{1}{2}} \right)\zeta _1^{1 + p} + \frac{{{k_{21}}}}{{1 + p}}\zeta _2^{1 + p} + \sum\limits_{i = 1}^2 {\frac{{{k_{i1}}p}}{{1 + p}}w_i^{1 + p}} } \right] \\
&- \left[ {\left( {\frac{{{k_{12}}}}{{1 + q}} - \frac{1}{2}} \right)\zeta _1^{1 + q} + \frac{{{k_{22}}}}{{1 + q}}\zeta _2^{1 + q} + \sum\limits_{i = 1}^2 {\frac{{{k_{i2}}q}}{{1 + q}}w_i^{1 + q}} } \right] \\
&-\left[ {\frac{{{\lambda _2}}}{2}{\left( {{{\hat \Omega }_d} - {{\bar d}_1}} \right)^{1 + p}} + \frac{q}{{1 + q}}{{\left( {{{\hat \Omega }_d} - {{\bar d}_1}} \right)}^{1 + q}}} \right] + {\mu _3}
\end{aligned}
\end{equation}

With the aid of \emph{Lemma 6}, it follows from (39) that
\begin{equation}
\dot V(x) \le  - {\mu _1}V{(x)^{\frac{{1 + p}}{2}}} - {\mu _2}V{(x)^{\frac{{1 + q}}{2}}}{\rm{ + }}{\mu _3}
\end{equation}
where ${\mu _1} > 0,{\mu _2} > 0,{\mu _3} > 0$. 

In light of \emph{Lemma 1}, the closed-loop system is practical FxT stable, and ${w_i},{\zeta _i}$ converge to the following residual set within a fixed time:
\begin{equation}
\begin{aligned}
&\left| {{w_i}} \right| \le \min \left\{ {\sqrt {2{{\left( {\frac{{{\mu _3}}}{{\left( {1 - \tau } \right){\mu _1}}}} \right)}^{\frac{2}{{1 + p}}}}} ,\sqrt {2{{\left( {\frac{{{\mu _3}}}{{\left( {1 - \tau } \right){\mu _2}}}} \right)}^{\frac{2}{{1 + q}}}}} } \right\}\\
&\left| {{\zeta _i}} \right| \le \min \left\{ {\sqrt {2{{\left( {\frac{{{\mu _3}}}{{\left( {1 - \tau } \right){\mu _1}}}} \right)}^{\frac{2}{{1 + p}}}}} ,\sqrt {2{{\left( {\frac{{{\mu _3}}}{{\left( {1 - \tau } \right){\mu _2}}}} \right)}^{\frac{2}{{1 + q}}}}} } \right\}
\end{aligned}
\end{equation}

Then $\left| {{z_1}} \right| \le \left| {{w_1}} \right| + \left| {{\zeta _1}} \right|$ is bounded. According to the property of UBF (29) and (30), it is concluded that the tracking error can converge to a predefined domain within a designer-defined time.

This completes the proof.
\end{proof}

\section{Simulation Results}
To verify the feasibility and effectiveness of the developed control protocol, numerical simulations are performed on the elevation channel. The state is initialized at ${x_1}(0) =  - \frac{{2\pi }}{{15}}(rad)$ and the disturbance is ${d_1} = 0.3 \sin (2t)({{rad} \mathord{\left/
 {\vphantom {{rad} {{s^2}}}} \right.
 \kern-\nulldelimiterspace} {{s^2}}})$. The given command is set as:
\begin{equation}
{x_{1d}}(t){\rm{ = }}\frac{\pi }{{18}}\sin (0.3\pi t - \frac{\pi }{2})
\end{equation}
\subsection{The effectiveness verification of the designed UBF}
In this simulation study, the PF parameters are designed as: ${T_s} = 1.2,\Delta  = 0.05,{e_\infty } = 0.03$, and the parameters of the developed control protocol are presented as follows: ${k_{11}} = 1,{k_{12}} = 1.5,{k_{21}} = 2,{k_{22}} = 6,{c_1} = {c_2} = 1,{\delta _1} = 0.1,p = \frac{3}{5},q = \frac{5}{3}$, while $m = n = 1,\frac{1}{3},\frac{1}{5},\frac{1}{7}$ respectively to show the performance of the designed UBF. The UBF in \cite{UBF} is also utilized as the contrasting method, which is marked as classical-UBF.
\begin{figure}
	\centering
	\subfigure[Control inputs with classical and designed UBF]{
	\includegraphics[width=0.4\textwidth]{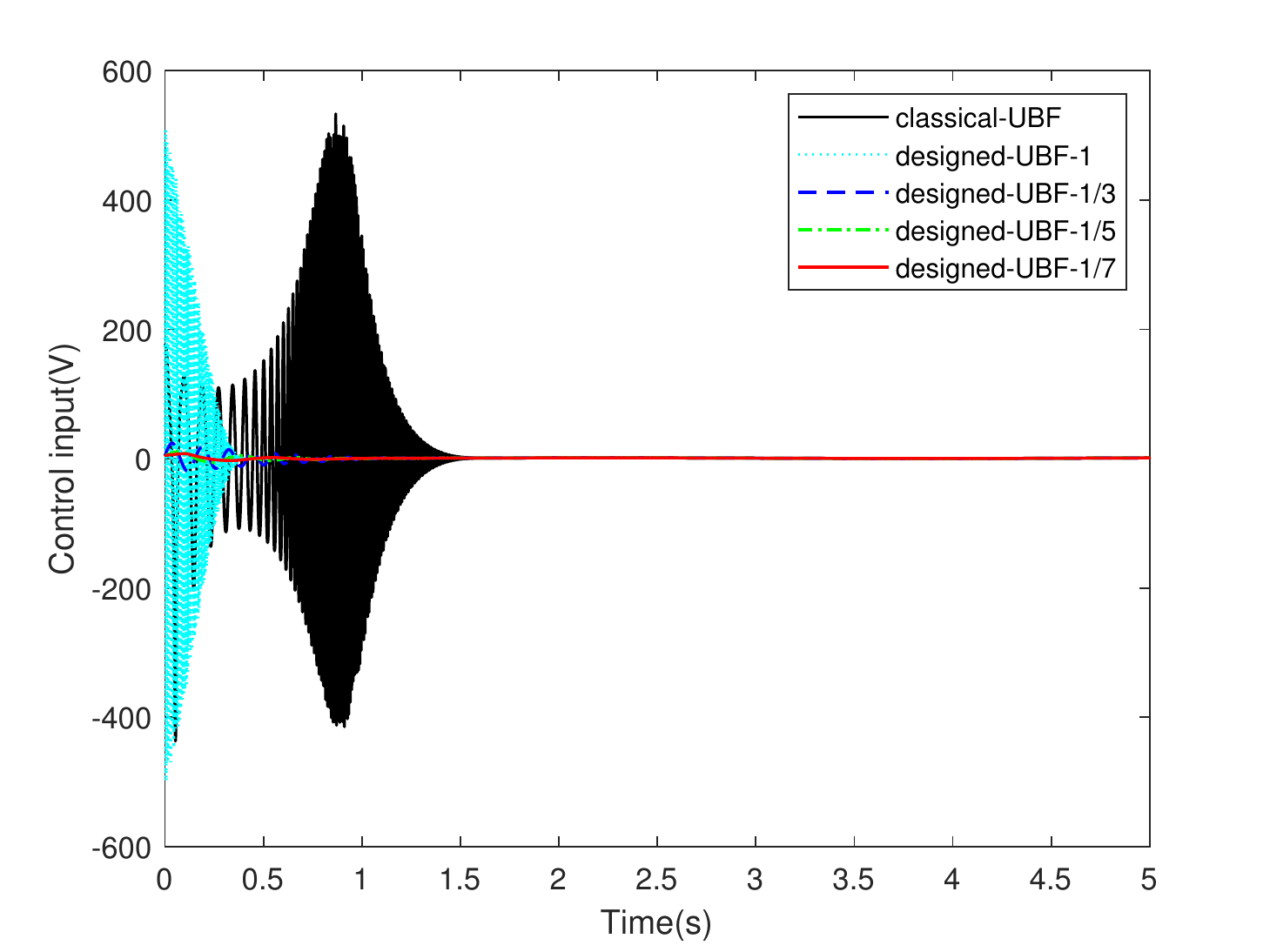}}
	\subfigure[Control inputs with designed UBF]{
	\includegraphics[width=0.4\textwidth]{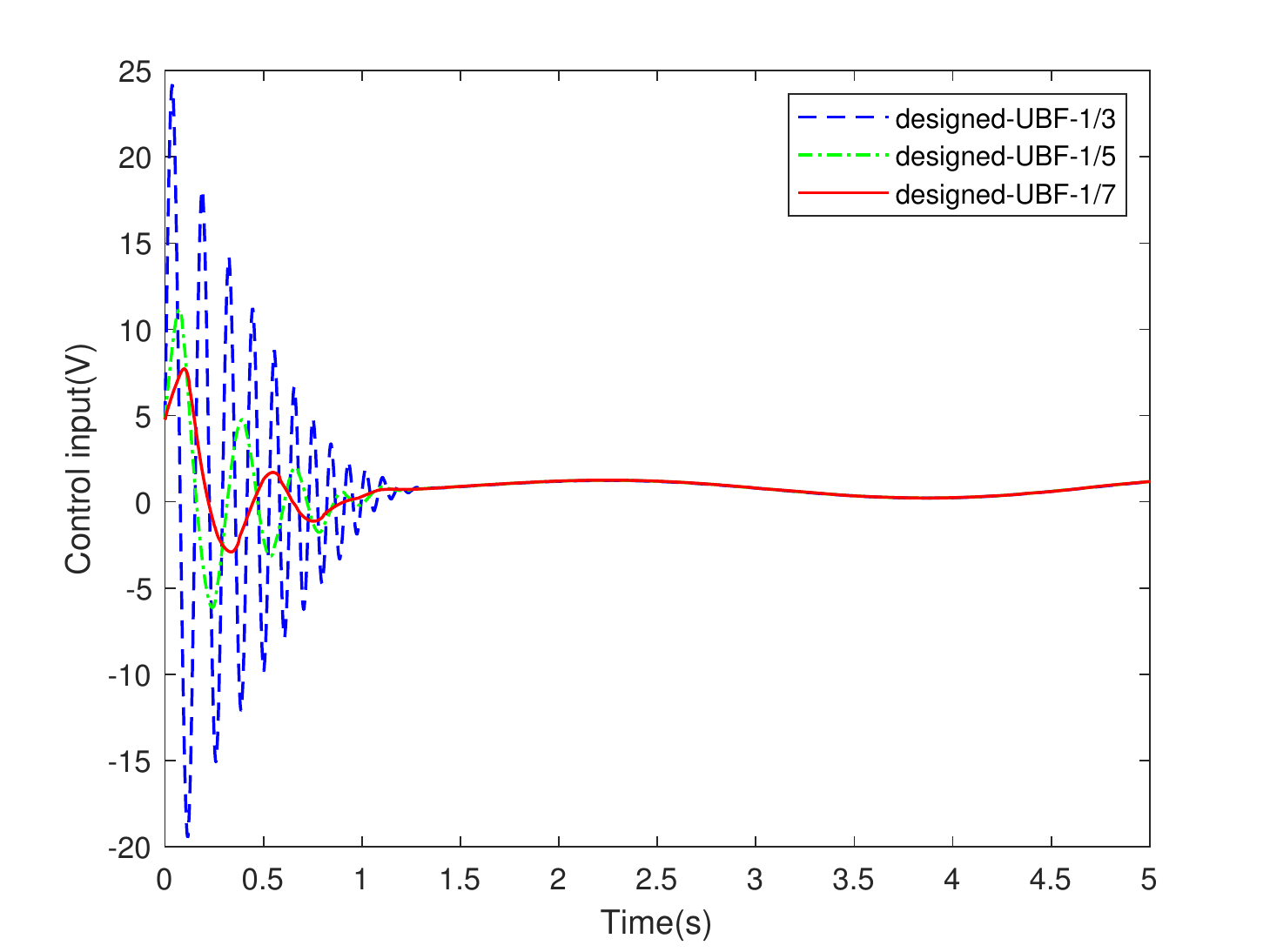}}
	\caption{Control inputs}
\end{figure}

The numerical simulation results are plotted in Fig. 2 (a)-(b). It is observed that by selecting appropriate designed parameters, the newly developed UBF can effectively diminish the consumption of the control input energy caused by the introduction of constraints.
\subsection{The simulation of attitude tracking control}
In this study, the PF parameters are designed as: ${T_s} = 1.2,\Delta  = 0.05,{e_\infty } = 0.01$, and the parameters of the designed UBF are selected as: ${c_1} = {c_2} = 0.2,m = n = \frac{1}{7}$, while the other parameters are identical to those in simulation I. The developed control protocol without PF, and the CFB algorithm in \cite{2.li2021} without RBFNN are adopted as the contrasting approaches.
\begin{figure}\centering
	\includegraphics[width=0.4\textwidth]{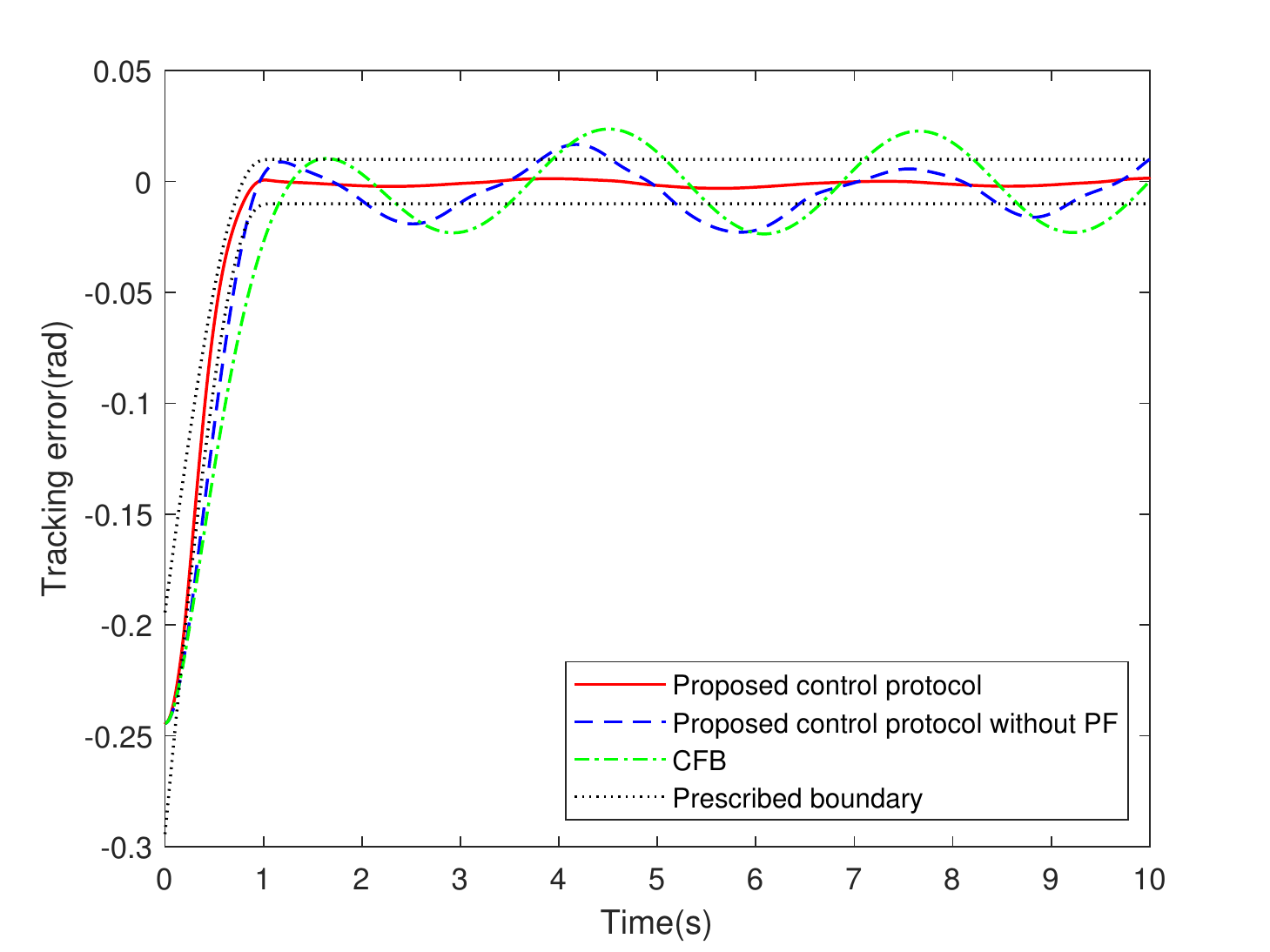}
	\caption{Tracking errors}
\end{figure}

The curves of the tracking error are displayed in Fig. 3, which demonstrates that the attitude tracking error can converge to a predefined region within a designer-defined time. Furthermore, the tracking error owns better transient behavior, especially small overshoot.

\section{Conclusion}
A novel prescribed performance adaptive fixed-time backstepping control protocol was proposed in this article to tackle the attitude tracking problem of a 3-DOF helicopter subject to disturbances. A newly designed UBF was adopted to transform the prescribed performance constrained system into an unconstrained one. By virtual of an improved FxT backstepping control algorithm, the attitude tracking was fulfilled. A modified FxT stability theorem was presented to provide a less conservative and more accurate approximation of the settling time than the classical result. Theoretical analysis proves that all signals of the system are FxT bounded, while the tracking error converges to a predetermined domain with small overshoot in a user-defined time. Simulation results show the feasibility and effectiveness of the presented control strategy.

\balance
\bibliographystyle{Bibliography/IEEEtranTIE}
\bibliography{Bibliography/IEEEabrv,Bibliography/myRef}\ 

\end{document}